\pdfoutput=1

\documentclass[11pt]{article}
\usepackage[noblocks]{authblk}
\usepackage{titlesec}
\usepackage{geometry}
\usepackage{multirow}
\usepackage{fancyhdr}
\usepackage[numbers]{natbib}
\usepackage{graphicx}
\usepackage[english]{babel}
\usepackage{amsmath}
\usepackage{amssymb}
\usepackage{amsfonts}
\usepackage{amsthm}
\usepackage{thmtools} 
\usepackage{thm-restate}
\usepackage{accents}
\usepackage[noend]{algpseudocode}
\usepackage{caption}
\usepackage{subcaption}

\usepackage[dvipsnames]{xcolor}
\definecolor{cb-black}      {RGB}{  0,   0,   0}
\definecolor{cb-blue-green} {RGB}{  0,  073,  073}
\definecolor{cb-green-sea}  {RGB}{  0, 146, 146}
\definecolor{cb-rose}       {RGB}{255, 109, 182}
\definecolor{cb-salmon-pink}{RGB}{255, 182, 119}
\definecolor{cb-purple}     {RGB}{ 73,   0, 146}
\definecolor{cb-blue}       {RGB}{ 0, 109, 219}
\definecolor{cb-lilac}      {RGB}{182, 109, 255}
\definecolor{cb-blue-sky}   {RGB}{109, 182, 255}
\definecolor{cb-blue-light} {RGB}{182, 219, 255}
\definecolor{cb-burgundy}   {RGB}{146,   0,   0}
\definecolor{cb-brown}      {RGB}{146,  73,   0}
\definecolor{cb-clay}       {RGB}{219, 209,   0}
\definecolor{cb-green-lime} {RGB}{ 36, 255,  36}
\definecolor{cb-yellow}     {RGB}{255, 255, 109}

\usepackage[colorlinks,linkcolor=cb-burgundy,urlcolor=cb-burgundy,citecolor=cb-burgundy,destlabel]{hyperref}
\usepackage[nameinlink,capitalise]{cleveref}
\usepackage{nicefrac}
\usepackage{stmaryrd}
\usepackage[normalem]{ulem}
\usepackage{lipsum}
\usepackage{enumitem}

\makeatletter
\let\OldStatex\Statex
\renewcommand{\Statex}[1][3]{
  \setlength\@tempdima{\algorithmicindent}
  \OldStatex\hskip\dimexpr#1\@tempdima\relax}
\makeatother

\algnewcommand{\Inp}{\textbf{Input:}\space}
\algnewcommand{\Out}{\textbf{Output:}\space}
\newcommand{\Input}{\Statex[-1] \Inp }
\newcommand{\Output}{\Statex[-1] \Out }
\newcommand{\Blank}{\Statex[-1]}

\makeatletter
\providecommand\theHALG@line{\thealgorithm.\arabic{ALG@line}}
\makeatother

\newtheorem{theorem}{Theorem}
\counterwithin{figure}{section}
\counterwithin{table}{section}
\counterwithin{theorem}{section}
\newtheorem{proposition}[theorem]{Proposition}

\newtheorem{definition}[theorem]{Definition}
\newtheorem{problem}[theorem]{Problem}

\theoremstyle{remark}

\declaretheoremstyle[
  notefont=\mdseries, notebraces={(}{)},
  bodyfont=\normalfont,
  postheadspace=0em,
  headpunct=
]{algostyle}
\theoremstyle{algostyle}

\newtheorem{algorithm}[theorem]{Algorithm}

\newcommand{\ket}[1]{|#1\rangle}

\newcommand*{\f}{\mathbb{F}}

\newcommand*{\ip}[1]{ \langle #1 \rangle }

\newcommand{\comm}[1]{ \llbracket #1 \rrbracket }

\newcommand*{\nM}{n_M}

\newcommand*{\Co}{R} 

\newcommand*{\Vph}{V_{\varphi}} 
\newcommand*{\Uh}{U_{H}} 
\newcommand*{\Uph}{U_{\varphi}} 
\newcommand*{\ph}{\zeta_{8}}

\newcommand{\algemph}[1]{\colorbox{cb-blue-light!35!white}{#1}}

\newcommand{\emphphase}[1]{\colorbox{cb-purple!10!white}{#1}}

\makeatletter
\def\smalloverbrace#1{\mathop{\vbox{\m@th\ialign{##\crcr\noalign{\kern3\p@}
  \tiny\downbracefill\crcr\noalign{\kern3\p@\nointerlineskip}
  $\hfil\displaystyle{#1}\hfil$\crcr}}}\limits}
\makeatother

\DeclareEmphSequence{\bfseries,\mdseries}

\title{
Phased outcome-complete simulation
}

\author[1]{Vadym Kliuchnikov\thanks{Current address: NVIDIA, Toronto, ON M5V 1K4, Canada}}
\affil[1]{Microsoft Quantum, Toronto, ON M5J 0E7, Canada}
\author[2]{Adam Paetznick}
\author[2]{Marcus P. da Silva}
\affil[2]{Microsoft Quantum, Redmond, WA 98052, USA}

\begin{document}

\maketitle

\begin{abstract}
We generalize the polynomial-time outcome-complete simulation algorithm for stabilizer circuits~[Kliuchnikov, Beverland, and Paetznick,~\hyperlink{https://arxiv.org/abs/2309.08676}{arXiv:2309.08676}] to track global phases exactly, yielding what we call \emph{phased outcome-complete simulation}.
The original algorithm enabled equivalence checking of stabilizer circuits with intermediate measurements and conditional Pauli corrections for all input states and all measurement outcomes simultaneously, but it tracked quantum states only up to a global phase.
Our generalization removes this limitation and enables equivalence checking for an important family of non-stabilizer circuits: stabilizer circuits augmented with single-qubit rotations $\exp(i\alpha Z)$ by symbolic angles.
Two such circuits are equivalent if they implement the same quantum channel for all values of the symbolic angles and all measurement outcomes, given a one-to-one correspondence between rotation angles in the two circuits and a mapping between measurement outcomes.
This model enables testing of compilation algorithms that transform the Clifford portions of a computation while preserving rotation angles. Examples include Pauli-based computation, edge-disjoint path compilation for surface codes, and custom compilation strategies for reversible circuits such as adders, multipliers, and table lookups.
Our efficient classical verification methods extend naturally to circuits with outcome-parity-conditional Pauli gates and intermediate measurements, features that are ubiquitous in fault-tolerant quantum computing but are rarely addressed by existing equivalence-checking approaches.
\end{abstract}

\section{Introduction}

Verification and debugging of quantum circuits are essential components of the quantum computing development cycle.
As quantum computers scale toward fault-tolerant architectures, the circuits describing quantum algorithms undergo increasingly sophisticated compilation steps.
Ensuring correctness at each stage of this compilation pipeline is critical; however, verifying equivalence of general quantum circuits is QMA-complete~\cite{janzing2005}.
We therefore seek efficient equivalence-checking methods for important but restricted families of quantum circuits.

Previous work~\cite{KBP, MicrosoftQDKEC} introduced an \emph{outcome-complete simulation} algorithm that enables equivalence checking of stabilizer circuits for all possible input states and measurement outcomes.
This class of circuits includes those with intermediate measurements, random outcomes, and Pauli unitaries conditioned on parities of measurement results.

It is standard to reduce equivalence checking of two circuits for all possible input states and measurement outcomes to equivalence checking of two quantum states for all possible measurement outcomes by considering the Choi states of the circuits.
The key insight was that states prepared by stabilizer circuits can be reduced to a canonical \emph{general form} that captures the circuit's measurement outcomes and the prepared quantum state as functions of random outcomes in a single representation.
The algorithm for computing such a representation is called outcome-complete simulation.\footnote{Throughout, we use \emph{simulation} to mean the computation of a representation that captures the state prepared by a stabilizer circuit, the circuit's measurement outcomes, and their dependence on random outcomes. This is a departure from the standard notions of weak and strong simulation of quantum circuits.}
However, this algorithm tracked quantum states only up to a global phase, which limits its applicability beyond the stabilizer formalism.

In this work, we generalize outcome-complete simulation to track global phases exactly, yielding what we call \textbf{phased outcome-complete simulation}.
This seemingly modest extension has significant practical consequences: it enables equivalence checking for certain families of non-stabilizer circuits.

Specifically, we consider stabilizer circuits augmented with single-qubit rotation gates $\exp(i\alpha Z)$ for \emph{symbolic} angles $\alpha$.
Two such circuits are equivalent if they implement the same quantum channel for all values of the symbolic angles and all measurement outcomes, given a correspondence between angles and a mapping between outcomes.
We reduce this \textbf{non-stabilizer} equivalence problem to checking whether certain families of \textbf{stabilizer} states are exactly equal for all measurement outcomes --- not merely equal up to a global phase.
This is where phased outcome-complete \textbf{stabilizer} simulation becomes essential.
This is a restricted setting, as two circuits might still be equivalent for specific values of symbolic angles, but our algorithm cannot detect such cases.
Additionally, some equivalent circuits may have different numbers of single-qubit rotations, and our algorithm does not apply to them.

However, the symbolic-angle model captures an important class of compilation algorithms in fault-tolerant quantum computing.
Modern compilation strategies for surface codes and quantum LDPC codes often transform the Clifford portions of a computation while relocating non-Clifford gates without modifying their rotation angles.
Examples include Pauli-based computation~\cite{beverland2022, aasen2025topologicallyfaulttolerantquantumcomputer}, edge-disjoint path compilation~\cite{BeverlandKliuchnikovSchoute}, and custom compilation strategies for reversible circuits such as table lookups, adders, and multipliers~\cite{haner2022spacetime, gidney2019flexiblelayoutsurfacecode}.
These algorithms operate symbolically on rotation angles and do not require their numerical values during compilation.
Our verification framework is well suited to testing the correctness of such compilers.

A notable feature of our setting is the presence of outcome-parity-conditional Pauli gates and intermediate measurements.
Most prior equivalence-checking work focuses on unitary gate sequences~\cite{Amy2019, PehamBurgholzerWille, burgholzer2020}, yet practical fault-tolerant circuits routinely include measurements with random outcomes and gates conditioned on those outcomes.
Indeed, some of the leading approaches for logical operations on encoded qubits are entirely measurement-based~\cite{Horsman2012, Beverland2022Disjoint, haner2022spacetime}.
Our methods handle these features naturally, extending the applicability of efficient classical verification to a broader and more realistic class of circuits.

\paragraph{Paper outline.}
\Cref{sec:verification} presents the verification problem and shows how non-stabilizer equivalence reduces to exact equality of stabilizer states.
\Cref{sec:phased-clifford-unitaries,sec:measurement-as-unitary,sec:aux-separation} develop the technical components: an efficient representation for phased Clifford unitaries, a phase-aware measurement lemma, and auxiliary qubit separation.
The main algorithm appears in \cref{alg:outcome-complete-stab-phase-sim}.
\Cref{app:css-orbits,tab:bruhat-decompositions} contain supporting algorithms and lookup tables.

\section{Preliminaries and notation}

We closely follow notation and terminology introduced in Section~1 in~\cite{KBP}.
We find it useful to refer to Hermitian Pauli unitaries $\pm \{I,X,Y,Z\}^{\otimes n}$, as \emph{Pauli observables}. 
For bit-vector $x$ of length $n$, we define $X^x = X^{x_1} \otimes \ldots \otimes X^{x_n}$.
In our algorithms it is convenient to represent multi-qubit Pauli unitaries $P$ as $i^{s(P)} X^{x(P)} Z^{z(P)}$, where $x(P)$ denotes the $x$ bits of $P$, $z(P)$ denotes the $z$ bits of $P$, and $s(P)$ is the $xz$-phase of $P$.

We define the \emph{controlled-Pauli unitary} for any distinct commuting Pauli observables $P_1,P_2$ as: 
\begin{equation}
\label{eq:controlled-pauli}
\Lambda(P_1,P_2)
=
\frac{I+P_1}{2} +  \frac{I-P_1}{2} \cdot P_2.
\end{equation}
For some useful identities involving $\Lambda(P_1,P_2)$ and Pauli exponents $e^{i\frac{\pi}{4}P}$ see Section~2 in~\cite{KBP}. The following identity will be used later:
\begin{align}
\label{eq:controlled-pauli-image}
\Lambda(P_1,P_2)~Q~\Lambda(P_1,P_2) & = P_1^{\comm{P_2,Q}}~Q~P_2^{\comm{P_1,Q}}.
\end{align}

We follow the following definition of stabilizer circuits from~\cite{KBP}:
\begin{definition}[Stabilizer circuit]
\label{def:stabilizer-circuit}
A \emph{stabilizer circuit} is any sequence of the following elementary operations,
that we call \emph{stabilizer operations},
\begin{itemize}[noitemsep]
    \item allocations of qubits initialized to zero states,
    \item allocation of classical random bits distributed as fair coins, 
    \item Clifford and Pauli unitaries,
    \item non-destructive Pauli measurements,
    \item deallocation of qubits in zero states,
    \item Pauli unitaries conditioned on the parity of sets of measurement outcomes and classical random bits from earlier in the sequence.
\end{itemize}
Any stabilizer circuit starts with qubits in an arbitrary state that we call \emph{input qubits}. 
Qubits that remain after executing the circuit are \emph{output qubits}.
We call the sequence of all measurement outcomes and classical random bits allocated by the circuit the \emph{circuit outcome vector}.
\end{definition}

We need to distinguish Clifford unitaries specified up to a global phase (that is, as elements of the projective unitary group) and Clifford unitaries specified fully as a unitary matrix. We refer to the former simply as Clifford unitaries, as is customary, and to the latter as \textbf{phased} Clifford unitaries. 

We say that a Clifford unitary $C$ is a \textbf{CSS Clifford} if the images $C X_j C^\dagger$ are tensor products of $X$s and the images $C Z_j C^\dagger$ are tensor products of $Z$s. 
These Clifford unitaries can be described by an invertible square matrix $A$ as $U_A|r\rangle = |Ar\rangle$.
Conjugating an $n$-qubit CSS Clifford by the $n$-th tensor power of Hadamard gives the dual CSS Clifford (this is a corollary of Proposition~A.2 in~\cite{KBP}):
\begin{equation} \label{eq:css-dual}
    (H^{\otimes n}) U_A (H^{\otimes n}) = U_{A^{-T}}, \text{ where } A^{-T} = {(A^T)^{-1}}
\end{equation}
For example, operators $\Lambda(X^x,Z^z)$ with $\ip{x,z}=0$ are CSS Clifford~(\cref{eq:controlled-pauli-image}). We call sequences of such operators \textbf{CSS circuits}, the \textbf{length} of the circuit is the length of the sequence. CSS Clifford unitaries are products of controlled-$X$s~\cite{AaronsonGottesman2004}. Similarly, we call Clifford unitaries \textbf{phase-CSS} if they are products of $S$, controlled-$Z$, and controlled-$X$. This class of unitaries plays an important role in the Bruhat decomposition~\cite{MaslovRoetteler2018} of Clifford unitaries and has a well-understood structure of its binary-symplectic representation~\cite{MaslovRoetteler2018}. 

Matrices other than Clifford unitaries are over $\mathbb{F}_2$, vectors are column-vectors over $\mathbb{F}_2$,
and the inner product $\ip{a,r}$ of vectors $a,r$ is also over $\mathbb{F}_2$, as are matrix additions and multiplications. We use the notation $n(C)$ for the number of qubits on which Clifford unitary $C$ is specified, and $n(r)$
for the dimension of vector $r$. Vector coordinates are $1$-indexed; for example, $r_{n(r)}$ is the last coordinate of $r$. 
We use $e_j$ to denote the standard basis vector with a $1$ in the $j$-th coordinate and $0$s elsewhere, 
and $[n]$ for the set $\{1,2,\ldots,n\}$.

\section{Circuit verification via phased outcome-complete simulation}
\label{sec:verification}

Our goal is to introduce algorithmic tools for circuit verification beyond stabilizer circuits considered in~\cite{KBP}. 
We start with a simple motivating example. 
Let $C_1, C_2, D_1, D_2$ be Clifford unitaries. 
We would like to check whether the following is true:
\begin{equation} \label{eq:non-stab-equality}
     \text{For all } \alpha \in \mathbb{R},\,  C_1 \exp(i \alpha Z_1 ) C_2 |0\rangle \stackrel{?}{=} D_1 \exp(i \alpha Z_1 ) D_2 |0\rangle.
\end{equation}

We can check that the above holds if and only if the following stabilizer states are equal
\begin{equation} \label{eq:stab-equality}
\text{For all } a \in \{0,1\},\, C_1 Z_1^a C_2 |0\rangle \stackrel{?}{=} D_1 Z_1^a D_2 |0\rangle. 
\end{equation}
For the argument, it is crucial that the equality is exact rather than up to a global phase.
Recall that $\exp(i \alpha Z_1 ) = \cos(\alpha) I + i \sin(\alpha) Z_1$. 
By choosing $\alpha = 0, \nicefrac{\pi}{2}$, \cref{eq:stab-equality} immediately follows from \cref{eq:non-stab-equality}. Similarly, \cref{eq:non-stab-equality} follows from \cref{eq:stab-equality} by taking linear combinations of \cref{eq:stab-equality} evaluated at $a=0,1$ and weights $\cos(\alpha), i\sin(\alpha)$.

\cref{eq:stab-equality} can be interpreted as equality of two stabilizer state preparation circuits with a random bit $a$.
The solution to checking equality of even more general stabilizer state preparation circuits with random bits, stabilizer measurements, and Pauli unitaries conditional on parities of measurement outcomes relies on 
the outcome-complete stabilizer simulation algorithm introduced in~\cite{KBP} that solves the following problem: 

\begin{problem}[Outcome-complete stabilizer circuit simulation]
\label{def:outcome-complete-stab-sim}
Consider any stabilizer circuit with no input qubits, $n$ output qubits, and a length-$\nM$ outcome vector. 
Find the vector of non-zero conditional probabilities $\overrightarrow{q} \in \{1,1/2\}^{\nM}$,
a Clifford unitary $\Co$,
matrices $A$ and $M$ and a vector $v_0$ with entries in $\f_2$
that satisfy the following properties:
\begin{itemize}[noitemsep]
    \item each possible outcome vector $v$ is an element of the set $\{M r : r \in \{0,1\}^{n_r} \}$, where $n_r =|\{ \overrightarrow{q}_k =1/2 : k \in [\nM] \}|$, 
    \item for any outcome vector, $\overrightarrow{q}_j$ is the probability of obtaining outcome $v_j$ given previous outcomes $v_1,\ldots,v_{j-1}$,
    \item $\Co | A r \rangle $ equals to the output state of the circuit given the outcome vector $v = v_0 + M r$ up to \textbf{a global phase} .
\end{itemize} 
\end{problem}

The only reason we cannot readily apply the outcome-complete simulation algorithm to the above problem 
is that it only considers equality up to a global phase, and in our example we need exact equality.
In this work, we generalize the outcome-complete simulation problem to include global phase~(\cref{def:outcome-complete-phase-stab-sim}) and provide an efficient algorithm for it~(\cref{alg:outcome-complete-stab-phase-sim}).
Note that our example readily generalizes to many angles $\alpha$ in \cref{eq:non-stab-equality} and
to using stabilizer channels instead of the Clifford unitaries $C_j, D_j$. 
Similarly, we can include circuits with inputs by considering Choi states of the circuits.
Here we focus on outcome-complete simulation that includes global phase, which we call \textbf{phased-outcome-complete simulation}, as the core idea to enable the above-mentioned generalizations.  

\section{Phased outcome-complete simulation algorithm}

The generalization of outcome-complete simulation to include global phase is straightforward; however, we need to address a few technical details. The key aspects that need upgrading are:
\begin{enumerate}
    \item Clifford unitary operations, using an efficient data structure for phased Clifford unitaries (\cref{sec:phased-clifford-unitaries}).
    \item Conditional Pauli operators.
    \item Measurements with random outcomes, generalizing Proposition~2.2 of~\cite{KBP} (\cref{sec:measurement-as-unitary}).
    \item Separating auxiliary from output qubits, generalizing Appendix~C of~\cite{KBP} (\cref{sec:aux-separation}).
\end{enumerate}

We next proceed to formally state the phased outcome-complete stabilizer circuit simulation~\cref{def:outcome-complete-phase-stab-sim}, 
provide an efficient algorithm~(\cref{alg:outcome-complete-stab-phase-sim}) for it, and sketch a correctness proof. 
The formal statement of the problem is as follows. We \emphphase{highlight} the differences from the outcome-complete simulation problem~\cref{def:outcome-complete-stab-sim}.

\begin{problem}[Phased outcome-complete stabilizer circuit simulation]
\label{def:outcome-complete-phase-stab-sim}
Consider any stabilizer circuit with no input qubits, $n$ output qubits, and a length-$\nM$ outcome vector. 
Find the vector of non-zero conditional probabilities $\overrightarrow{q} \in \{1,1/2\}^{\nM}$,
a \emphphase{phased} Clifford unitary $\Co$,
matrices $A$, \emphphase{$B$}, and $M$, and vectors $v_0$, \emphphase{$p, s$} with entries in $\f_2$
that satisfy the following properties:
\begin{itemize}[noitemsep]
    \item each possible outcome vector $v$ is an element of the set $\{v_0 + M r : r \in \{0,1\}^{n_r} \}$, where $n_r =|\{ \overrightarrow{q}_k =1/2 : k \in [\nM] \}|$, 
    \item for any outcome vector, $\overrightarrow{q}_j$ is the probability of obtaining outcome $v_j$ given previous outcomes $v_1,\ldots,v_{j-1}$,
    \item \emphphase{$i^{\ip{p,r}} (-1)^{\ip{Br+s,r}}$}$\Co | A r \rangle$  is the output state of the circuit given the outcome vector $v = v_0 + M r$.
\end{itemize} 
\end{problem}

The global phase dependence of the simulation output state on the random outcomes is similar to phase expressions for stabilizer states. The power of $i$ includes a linear function of the random outcome vector $r$, and the
power of $-1$ is quadratic in the random outcome vector $r$. Similar to how the difference between outcome-complete and outcome-specific simulation is highlighted in~\cite{KBP}, 
we \emphphase{highlight} differences related to tracking global phase in~\cref{alg:outcome-complete-stab-phase-sim}.
We also keep \algemph{highlights} that distinguish outcome-specific and outcome-complete simulation~\cite{KBP}.

\begin{figure*}[p]
\begin{algorithm}[\texttt{Phased outcome-complete stabilizer circuit simulation}] \label{alg:outcome-complete-stab-phase-sim}
\begin{algorithmic}[1]
\Blank
\Input A stabilizer circuit $\mathcal{C}$ with no input qubits, $n$ output qubits, and $\nM$ outcomes. 
\Output 
\begin{itemize}[noitemsep,topsep=0pt]
    \item a vector $\overrightarrow{q} \in \{1,1/2\}^{\nM}$ of conditional probabilities,
    \item a \emphphase{phased} Clifford unitary $\Co$ 
    \item matrices $A, M, B$ and column-vectors $v_0,p,s$ with entries in $\f_2$
\end{itemize}
that satisfy conditions of~\cref{def:outcome-complete-phase-stab-sim}, additionally $M^T$ is in reduced row echelon form
and the entries of $v_0$ corresponding to the row rank profile of $M$ are zero.

\State Initialize an empty vector $\overrightarrow{q}$, zero-qubit Clifford unitary \emphphase{with phase} $\Co$, dimension-zero matrices \algemph{$A$}, \emphphase{$M, B$}, column-vectors \algemph{$v_0$}, \emphphase{$p ,s$}.

\For{$g$ in $\mathcal{C}$}

\hrulefill\Comment{allocation}
\If{$g$ allocates qubit $j$,\label{line:phase-complete-allocate}}
  \State replace $\Co \leftarrow \Co \otimes_j I_2$,
  \State \algemph{insert a zero row into $A$ after row $j-1$.}
\ElsIf{$g$ allocates a random bit\label{line:phase-complete-random}}
  \State append $1/2$ to $\overrightarrow{q}$,
  \State \algemph{add a zero column to $A$}, \emphphase{add a zero row and column to $B$}
  \State \algemph{append zero to the end of $v_0$}, \emphphase{$p$, $s$},
  \State \algemph{add a zero column and row to $M$ and set the last bit in the row to $1$.}

\hrulefill\Comment{unitaries}
\ElsIf{$g$ is a unitary $U$\label{line:phase-complete-unitary}} 
  \State replace $\Co \leftarrow U \Co$.
\ElsIf{$g$ applies a Pauli unitary $P$ if $\langle c\rangle = c_0$, \\$\quad\quad\quad\quad\quad$ where $\langle c\rangle$ is the parity of outcomes indicated by $c\in \mathbb{F}_2^{\nM}$,\label{line:phase-complete-conditional}}
  \State replace $\Co \leftarrow P^{c_0 + 1} \Co$ \algemph{followed by $\Co \leftarrow P^{\ip{v_0,c}} \Co$ followed by $a \leftarrow M^T c$}
  \State \algemph{find preimage $i^l X^x Z^z=\Co^\dagger P\Co$ } \label{line:phase-complete-conditional-r} \Comment{apply $P$ conditioned on bits of $r$ indicated by $a$}
  \State \emphphase{replace $B \rightarrow B + a z^T A $}, \algemph{replace $A \rightarrow A + x a^T $}
  \State \emphphase{update $p,s$ to $p',s'$ that satisfy $i^{\ip{p',r}} (-1)^{\ip{s',r}} = i^{\ip{p,r} } (-1)^{\ip{s,r}} (i^l)^{\ip{a,r}}$ for all $r$}

\hrulefill\Comment{measurements}
\ElsIf{$g$ measures Pauli $P$ given a hint Pauli $P'$ such that $\comm{P,P'}=1$ \\$\quad\quad\quad\quad\quad$ and  $P'\Co|\textbf{0}\rangle = \pm \Co|\textbf{0}\rangle$\label{line:phase-complete-fast-measure}}
  \State find $b'$ and $\alpha$ such that preimage $ \Co^\dagger P'\Co = (-1)^{\alpha} Z^{b'}$,
  \State replace $\Co\leftarrow (-1)^\alpha e^{i\frac{\pi}{4}(iP'P)} \Co$, followed by \algemph{$a \leftarrow A^T b'$}
  \State allocate a random bit (\cref{line:phase-complete-random}), \emphphase{replace $B \leftarrow B + (a\oplus 0)^T (a\oplus 1)$, $s_{n(s)} \leftarrow s_{n(s)} + \alpha$}
  \State \algemph{apply $P'$ conditioned on bits of $r$ indicated by $a \oplus 1$, according to \cref{line:phase-complete-conditional-r}.}\label{line:phase-complete-conditional-in-m}
\ElsIf{$g$ measures Pauli $P$,}
\State find preimage $Q = \Co^\dagger P \Co$. 
\If{$x(Q)=0$ (deterministic measurement)\label{line:phase-complete-deterministic-measure}} 
  \State append $1$ to $\overrightarrow{q}$,
  \State \algemph{add row $A^T z(Q)$ to $M$ and append $s(Q)/2$ to the end of $v_0$. }
\Else{ (uniformly random measurement)\label{line:phase-complete-random-measure}}
  \State let $j$ be the position of first non-zero bit of $x(Q)$, 
  \State find image $P' = \Co Z_j \Co^\dagger$ (which anticommutes with $P$).\label{line:phase-complete-image}
  \State measure $P$ with assertion $P'$ according to \cref{line:phase-complete-fast-measure}.
\EndIf
\EndIf
\EndFor
\State \Return vector $\overrightarrow{q}$, \emphphase{phased} Clifford unitary $\Co$,  matrices \algemph{$A,M$}, \emphphase{$B$}, vectors \algemph{$v_0$}, \emphphase{$p ,s$}.
\end{algorithmic}
\end{algorithm}
\end{figure*}

Next, we sketch the correctness proof for~\cref{alg:outcome-complete-stab-phase-sim} by examining the allocation, unitary, and measurement parts of the simulation algorithm. In the sketch, we follow the notation in the algorithm pseudocode.

Updates related to allocation of new qubits~(\cref{line:phase-complete-allocate}) ensure that the dimensions of $\Co,A$ match and that $\Co|Ar\rangle$ represents a state with an additional qubit in the zero state. Updates related to random bit allocation~(\cref{line:phase-complete-random}) ensure that the dimensions of 
vectors $v_0,p,s$ and matrices $A,B,M$ match the dimension of the random outcomes vector $r$ and that the functions $Ar$, $\ip{p',r}$,
$\ip{Br+s,r}$ evaluate to the same values as before.

Simulation of Clifford unitaries~(\cref{line:phase-complete-unitary}) differs only in using a new data structure to represent phased Clifford unitaries instead of the standard up-to-global-phase representation. The simulation of Pauli unitaries conditional on parities of measurement outcomes goes through the same process of converting dependence on measurement outcomes to dependence on random outcomes via matrix $M$ and shift $v_0$~(\cref{line:phase-complete-conditional}). Applying a Pauli operator $P$ conditional on random bits~(\cref{line:phase-complete-conditional-r}) involves a significant addition: we need to account for a conditional global phase 
when pulling a conditional Pauli through $\Co|A\rangle$: 
$$
 P^{\ip{r,a}} \Co|Ar\rangle =  \Co (i^l X^x Z^z)^{\ip{r,a}} |Ar\rangle = (i^l)^{\ip{r,a}} (-1)^{\ip{r,a}\ip{z,Ar}} \Co |x \ip{r,a} + Ar\rangle.
$$
The above equality translates into the rules for updating matrices $A,B$ and vectors $p,s$.

The deterministic measurement simulation~(\cref{line:phase-complete-deterministic-measure}) remains the same as in outcome-complete simulation because it does not affect the quantum state and only requires an update to the outcome map~($M, v_0$) relating random bits and circuit outcomes. The handling of random outcomes has been significantly updated~(\cref{line:phase-complete-random-measure}) due to changes in how we handle random measurements with hints~(\cref{line:phase-complete-fast-measure}). The main technical contribution is including the global phase when reducing uniformly random measurement to applying unitaries followed by conditional Pauli operations. This is detailed in~\cref{sec:measurement-as-unitary}. When the hint $P'$ for uniformly random measurement satisfies $ \Co^\dagger P'\Co = (-1)^{\alpha} Z^{b'}$, the quantum state $\Co|A r\rangle$ is stabilized by $(-1)^{\beta(r)} P'$, where $\beta(r) = \alpha + \ip{b',Ar}$. 
Following~\cref{prop:measure-as-exp}, the global phase needs to be updated by $(-1)^{\beta(\beta + \rho)}$,
where $\rho(r)$ is the value of the newly allocated random bit, 
$\Co$ must be replaced with $(-1)^\alpha e^{i\frac{\pi}{4}(iP'P)} \Co$, and Pauli $P'$ 
must be applied conditionally on $\rho(r) + \beta(r)$. 
Expressing functions $\rho(r) + \beta(r)$ in terms of random outcome vector $r$ gives us required expressions for updating $B,s$ and random outcome indicator $a \oplus 1$ for conditionally applying $P'$.

We have removed the explicit qubit deallocation step from the algorithm, as it is more efficient to reset qubits back to the zero state with a computational-basis measurement followed by a conditional Pauli $X$, as discussed in Appendix~C in~\cite{KBP}. The deallocated qubits can be reused when another qubit in the zero state needs to be allocated. This way, we only need to deallocate auxiliary qubits at the end of simulation. The details of handling the global phase are in \cref{sec:aux-separation}.

\subsection{Phased Clifford unitaries: an efficient representation}
\label{sec:phased-clifford-unitaries}

An efficient data structure for representing Clifford unitaries up to global phase is at the core of standard stabilizer simulation algorithms~\cite{AaronsonGottesman2004, KBP, Gidney2021}.
Using the Bruhat decomposition~\cite{MaslovRoetteler2018,BravyiMaslov2021} of Clifford unitaries and following the CH decomposition~\cite{BravyiBrowneAndCo} for representing stabilizer states with global phases, 
we show how to represent any phased Clifford unitary.

It follows from the Bruhat decomposition that any Clifford unitary can be written as 
$\Uph \Uh P \Vph$, where $\Uph$, $\Vph$ are phase-CSS unitaries,
$\Uh$ is a tensor product of phase-adjusted Hadamard gates and identity gates, and $P$ is a multi-qubit Pauli operator. 

Unitaries $\Uph, \Vph$ have eigenstate $|\textbf{0}\rangle$, which allows us to conveniently fix their global phase, as it is done in CH decomposition which represents stabilizer states as $\Uph \Uh |\textbf{0}\rangle$. 
For an efficient phase-sensitive representation of a Clifford unitary we 
keep track of $\Uph, \Vph$ using the standard binary-symplectic representation. Phases of $\Uh$ and $P$ 
are easy to fix too because they are both tensor products of $2 \times 2$ matrices. 
We refer to representation $\ph^m \Uph \Uh P \Vph$ for $\ph = e^{i\frac{\pi}{4}}$ as a phased Bruhat decomposition.

For the purposes of our simulation algorithm, we next show how to efficiently update Bruhat decomposition of phased Clifford unitaries when left-multiplying by other Clifford unitaries~(\cref{alg:phased-left-mul}). 
Left multiplication by Pauli exponents $e^{i\frac{\pi}{4}P}$ is of particular interest for two reasons. 
First, it is used for simulating measurements with uniformly random outcomes.
Second, any Clifford unitary can be written as a product of Pauli exponents~\cite{PllahaVolantoTirkkonen,OMearaSymplectic},
with a number of exponents linear in the dimension of the unitary.

\begin{figure*}[h]
\begin{algorithm}[\texttt{Phase sensitive left-multiplication by Pauli exponent}] \label{alg:phased-left-mul}
\begin{algorithmic}[1]
\Blank
\Input A Clifford unitary with a phase $C = \ph^{m} \Uph \Uh P \Vph$, Pauli exponent $e^{i\frac{\pi}{4}Q}$,
where $\Uph, \Vph$ are phase-CSS, $\Uh$ is a tensor product of $\tilde H$, $I$, and $P$ is a Pauli operator.
\Output 
$m, \Uph, \Uh, P, \Vph$ has been updated so that $\ph^{m} \Uph \Uh P \Vph = e^{i\frac{\pi}{4}Q} C$

\If{ $Q$ is a tensor product of $Z$ and $I$ }
     \State replace $\Uph \leftarrow e^{i\frac{\pi}{4}Q} \Uph$,
\Else{}
    \State let $\tilde Q \leftarrow (\Uph)^{\dagger} Q \Uph$ \Comment $e^{i\frac{\pi}{4}Q}\Uph \Uh P \Vph = \Uph e^{i\frac{\pi}{4}\tilde Q} \Uh P \Vph $
    \State let $J_H$ be indices of $\tilde H$ in $\Uh$, let $J_I = [n] - J_H$ be the complement of $J_H$
    \State let $\tilde Q_I, \tilde Q_H$ factorize $\tilde Q = \tilde Q_I \tilde Q_H$, support of $\tilde Q_I$ is $J_I$, support of $\tilde Q_H$ is $J_H$
    \State let $\mathcal{C}_I, \mathcal{C}_H$ be CSS circuits that map $\tilde Q_I, \tilde Q_H$ to $Q'_I, Q'_H$ in $\{ \pm I,\pm X, \pm Y, \pm Z, \pm YY \}$
    \Blank  \Comment{ \texttt{CSS-Orbit} \cref{alg:css-orbit}}
    \State let $Q' = Q'_H Q'_I$, let $\mathcal{C}^\#_H$ be the Hadamard-conjugated $\mathcal{C}_H$
    \State replace $\Uph \leftarrow \Uph \, \mathcal{C}^{-1}_I \, \mathcal{C}^{-1}_H$, $P \leftarrow \mathcal{C}_I \, \mathcal{C}^\#_H \, P \, (\mathcal{C}_I \, \mathcal{C}^\#_H )^{-1}$  ,$\Vph \leftarrow \mathcal{C}_I \, \mathcal{C}^\#_H \, \Vph$
    \State let $U'$ be $\Uh$ restricted to support of $Q'$ 
    \State lookup decomposition $\ph^{m'} \Uph' \Uh' R' \Vph'$ for $e^{i\frac{\pi}{4}Q} U'$ 
    \Blank \Comment{there are finitely many options for $e^{i\frac{\pi}{4}Q} U'$ listed in \cref{tab:bruhat-decompositions}}
    \State replace $m \leftarrow m + m'$,  $\Uh \leftarrow \Uh \Uh'$, $\Uph \leftarrow \Uph \Uph'$, $\Vph \leftarrow \Vph' \Vph$, $P \leftarrow P' ( (\Vph')^\dagger P \Vph')$
\EndIf
\end{algorithmic}
\end{algorithm}
\end{figure*}

We conclude this subsection with a sketch of the correctness proof of~\cref{alg:phased-left-mul}.
The algorithm proceeds in two main steps. 

First, we use the equality 
$$
e^{i\frac{\pi}{4}Q}\Uph \Uh P \Vph = \Uph e^{i\frac{\pi}{4}\tilde Q} \Uh P \Vph
$$
to reduce the question to computing phased Bruhat decomposition of $e^{i\frac{\pi}{4}\tilde Q} \Uh P$.
Indeed, it is easy to calculate phased Bruhat decomposition of $\Uph C \Vph$ if the phased Bruhat decomposition 
of $C$ is known.

Second, we reduce computing the phased Bruhat decomposition of $e^{i\frac{\pi}{4}\tilde Q} \Uh P$ 
to phased Bruhat decompositions of certain four-qubit phased Clifford unitaries. 
This is best illustrated by an example where $\Uh = \tilde H^{\otimes n} \otimes I^{\otimes m}$ and $P = I$.
Let $U_A, U_B$ be CSS Clifford unitaries on $n$ and $m$ qubits. 
It is easy to compute the phased Bruhat decomposition of $e^{i\frac{\pi}{4}Q} \Uh$, if we know the phased Bruhat decomposition of 
\begin{align*}
 (U_A \otimes U_B) e^{i\frac{\pi}{4}Q} \Uh (U_A \otimes U_B)^\dagger = & (U_A \otimes U_B) e^{i\frac{\pi}{4}Q} (U_A \otimes U_B)^\dagger  (U_A \otimes U_B) \Uh (U_A \otimes U_B)^\dagger = \\
   = & (U_A \otimes U_B) e^{i\frac{\pi}{4}Q} (U_A \otimes U_B)^\dagger \Uh U_{A^{-T}A}
\end{align*}
By appropriately choosing CSS Clifford unitaries $U_A,U_B$, we can ensure that the weight of $Q'$ in 
$$
e^{i\frac{\pi}{4}Q'} = (U_A \otimes U_B) e^{i\frac{\pi}{4}\tilde Q} (U_A \otimes U_B)^\dagger
$$
is at most four. 
This is because for any Pauli operator $P'$, there is a CSS Clifford $U_A'$ such that the
weight of $U_A' P' U_{A'}^\dagger$ is at most two~(\cref{alg:css-orbit}).
We have reduced the original problem to computing a phased Bruhat decomposition 
of the unitary $e^{i\frac{\pi}{4}Q'} U'$ supported on up to four qubits, where $U'$ is the restriction of $\Uh$ to the support of $Q'$. There are finitely many such unitaries, with their phased Bruhat decompositions provided in~\cref{tab:bruhat-decompositions}.

\subsection{Measurement as unitary}
\label{sec:measurement-as-unitary}

Below is a direct generalization of Proposition~2.2 in~\cite{KBP} that includes the global phase. 

\begin{proposition}[Measurement as unitary with phase] 

\label{prop:measure-as-exp}
Let $|\psi\rangle$ be a state stabilized by a Pauli observable $(-1)^b Q$ for $b \in \{0,1\}$,
and let $P$ be a Pauli observable that anticommutes with $Q$. 
The probability of outcome zero of measuring $P$ is $1/2$. 
For measurement outcome $r$, the resulting state is $(-1)^{b(r+b)} Q^{r+b} e^{i\frac{\pi}{4}(iQP)}|\psi\rangle$.
\end{proposition}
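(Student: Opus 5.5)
The plan is to compute the post-measurement state directly and compare it with the claimed expression, using only that $Q|\psi\rangle = (-1)^b|\psi\rangle$ and $PQ=-QP$. We take the convention that for outcome $r$ the resulting state is $\Pi_r|\psi\rangle/\|\Pi_r|\psi\rangle\|$, where $\Pi_r = (I+(-1)^r P)/2$. This convention fixes the global phase, which is exactly what the statement tracks.

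\emph{Probability.} First we show $\langle\psi|P|\psi\rangle = 0$. Inserting the stabilizer gives
\[
\langle\psi|P|\psi\rangle = \langle\psi|P(-1)^bQ|\psi\rangle = -(-1)^b\langle\psi|QP|\psi\rangle = -\langle\psi|P|\psi\rangle .
\]
Hence $\langle\psi|\Pi_r|\psi\rangle = 1/2$ for both outcomes. It follows that $\|\Pi_r|\psi\rangle\| = 1/\sqrt2$, so the resulting state is $(I+(-1)^rP)|\psi\rangle/\sqrt2$.

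\emph{The exponential.} Next we observe that $iQP$ is a Pauli observable. It is Hermitian, since $(iQP)^\dagger = -iPQ = iQP$, and it squares to $I$, since $(iQP)^2 = -QPQP = QQPP = I$. Therefore
\[
e^{i\frac{\pi}{4}(iQP)} = \tfrac{1}{\sqrt2}(I + i\cdot iQP) = \tfrac{1}{\sqrt2}(I - QP).
\]
Applying this to $|\psi\rangle$ and using $QP|\psi\rangle = -PQ|\psi\rangle = -(-1)^bP|\psi\rangle$, we get
\[
e^{i\frac{\pi}{4}(iQP)}|\psi\rangle = \tfrac{1}{\sqrt2}\bigl(I+(-1)^bP\bigr)|\psi\rangle.
\]
This is exactly the post-measurement state for outcome $r=b$. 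Since the prefactor $(-1)^{b(r+b)}Q^{r+b}$ is trivial when $r=b$, this settles that case.

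\emph{The other outcome.} For $r = b+1$ we apply $Q$ to the state above. This gives
\[
\tfrac{1}{\sqrt2}\bigl(Q|\psi\rangle + (-1)^bQP|\psi\rangle\bigr) = \tfrac{1}{\sqrt2}\bigl((-1)^b|\psi\rangle - P|\psi\rangle\bigr) = (-1)^b\,\tfrac{1}{\sqrt2}\bigl(I+(-1)^{r}P\bigr)|\psi\rangle .
\]
Multiplying by $(-1)^{b(r+b)} = (-1)^b$ cancels the stray sign and yields the correct post-measurement state. This completes both cases.

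The only delicate point is sign bookkeeping. This includes choosing the normalization convention for the post-measurement state, and tracking the factor $(-1)^b$ that appears when $Q$ is pulled past $P$ in the case $r\neq b$. It is precisely this factor that produces the phase $(-1)^{b(r+b)}$ in the statement.
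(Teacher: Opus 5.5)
Your proof is correct and follows essentially the same route as the paper. Both arguments get the equal outcome probabilities from the stabilizer $(-1)^bQ$, write the post-measurement state as $\frac{1}{\sqrt2}(I+(-1)^rP)|\psi\rangle$, expand $e^{i\frac{\pi}{4}(iQP)}=\frac{1}{\sqrt2}(I-QP)$, and verify the identity using $Q|\psi\rangle=(-1)^b|\psi\rangle$ and $PQ=-QP$; the only cosmetic difference is that you split into the cases $r=b$ and $r\neq b$, whereas the paper handles both at once by tracking the exponents $r+b$ and $b(r+b)$.
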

\begin{proof}
The probability of outcome zero is equal to the probability of outcome one because: 
$$
\langle \psi | \frac{I+P}{2} | \psi \rangle
=
\langle \psi | (-1)^b Q\frac{I+P}{2} (-1)^b Q | \psi \rangle
=
\langle \psi | \frac{I-P}{2} | \psi \rangle.
$$
For outcome $r$, the state is equal to $\frac{(I+(-1)^r P)}{\sqrt{2}} |\psi\rangle$.
To complete the proof we check that $ Q^{r+b} e^{i\frac{\pi}{4}(iQP)}|\psi\rangle = (-1)^{b(r+b)} \cdot \frac{(I+(-1)^r P)}{\sqrt{2}} |\psi\rangle$ by inspection:
\begin{align*}
Q^{r+b} e^{i\frac{\pi}{4}(iQP)}|\psi\rangle & = \frac{Q^{r+b}}{\sqrt{2}}|\psi\rangle - \frac{Q^{r+b} Q P}{\sqrt{2}}|\psi\rangle \\
 & = \frac{(-1)^{b(r+b)}}{\sqrt{2}}|\psi\rangle + \frac{ (-1)^{b+r + b(r+b+1)} P }{\sqrt{2}}|\psi\rangle, \\
 & = (-1)^{b(r+b)} \cdot \frac{(I+(-1)^r P)}{\sqrt{2}} |\psi\rangle,
\end{align*}
where we used that $Q^{r+b}\ket{\psi}=(-1)^{b(r+b)}\ket{\psi}$, $Q^{r+b +1}P = (-1)^{r+b+1}P Q^{r+b+1}$ and $e^{i \frac{\pi}{4} \phi P'} =
 I \cos(\phi) + i \sin(\phi) P'$.
\end{proof}

\subsection{Separating auxiliary qubits}
\label{sec:aux-separation}

We assume that auxiliary qubits are disentangled at the end of simulation.
To separate the auxiliary qubits, we follow Appendix~C in \cite{KBP}
and apply the bipartite normal form for families of stabilizer states.
At the end of phased outcome-complete simulation, the joint state of output and auxiliary qubits is (see \cref{def:outcome-complete-phase-stab-sim})
$$
i^{\ip{p,r}} (-1)^{\ip{Br+s,r}} \Co | A r \rangle 
$$
Next we show  how to find Clifford unitaries $\Co_1, \Co_2$ (supported on output and auxiliary qubits correspondingly), matrices $A_1, A_2$ and phase $i^l$ such that 
\begin{equation} \label{eq:aux-out-separation}
i^{\ip{p,r}} (-1)^{\ip{Br+s,r}} \Co | A r \rangle  = i^l i^{\ip{p,r}} (-1)^{\ip{Br+s,r}} (\Co_1 |A_1 r\rangle \otimes \Co_2 | A_2 r \rangle )      
\end{equation}

It follows from the bipartite normal form for a family of stabilizer states (see Problem~6.1 in~\cite{KBP}) that there exists a CSS Clifford unitary $U_F$ such that $\Co U_F \simeq C_1 \otimes C_2$, where $C_1$ is supported on 
the output qubits and $C_2$ is supported on the auxiliary qubits. This holds because the auxiliary qubits are disentangled, and the number of Bell pairs involved in the bipartite form is zero.
To recover the global phase in the up-to-phase equality  $\Co U_F \simeq C_1 \otimes C_2$, we decompose $C_1$ and 
$C_2$ as products $\Pi_1, \Pi_2$ of Pauli exponents $e^{i\frac{\pi}{4}P}$. 
Using \cref{alg:phased-left-mul} for multiplying a phased Clifford by Pauli exponents, we compute 
$$
i^l I = \Pi_1^\dagger \Pi_2^\dagger \Co U_F 
$$
which is identity up to global phase. Similarly, using \cref{alg:phased-left-mul} we compute $\Co_1, \Co_2$ from products $\Pi_1, \Pi_2$, and observe $A' = F A$ (using Proposition A.2 in \cite{KBP}).
Finally, using notation $(A_1 r) \oplus (A_2 r) = A'r$ with dimensions of $A_1 r$ and $A_2 r$ equal to qubit counts of $C_1, C_2$ we get \cref{eq:aux-out-separation}.

\section{Outlook}

We have introduced an outcome-complete simulation algorithm that tracks global phases. 
Beyond its application to the verification of non-stabilizer circuits described in \cref{sec:verification},
the new algorithm may be useful for analyzing fault-tolerant circuits in the presence of loss.
For example, certain kinds of qubit loss can be modeled by inserting projectors on $(I+Z)/2$
into the circuit. The channel implemented by such a circuit can be computed as a linear combination 
of two channels with $Z$ and $I$ in the location of the projector. 
Another potential application of our algorithm is speeding up low T-count simulation algorithms~\cite{BravyiBrowneAndCo}, especially for circuits that include many measurements with conditional Pauli corrections. Several questions related to the computational efficiency of our algorithm may improve its practicality:
\begin{enumerate}
    \item What is the most efficient way to represent matrix $B$ in \cref{alg:outcome-complete-stab-phase-sim}? 
    \item What is the most efficient data structure for phased Clifford unitaries?
    \item Can the scaling of runtime with the number of random bits be improved in both conventional and phased outcome-complete simulation algorithms? 
\end{enumerate}

\section*{Acknowledgments}
V.K. has used Claude Opus 4.5 model in GitHub copilot for proofreading, checking consistency of notation, and drafting of abstract and introduction. This work was completed while V.K. was a researcher at Microsoft Quantum.

\bibliographystyle{plainurl}
\bibliography{references}

\newpage 

\appendix

\section{Orbits of Pauli operators with respect to CSS Clifford unitaries.}
\label{app:css-orbits}

The algorithm below shows that any Pauli operator $P$ can be transformed into a Pauli operator of weight at most two 
by using at most two CSS controlled-Pauli operators, that is, operators $\Lambda(X^x,Z^z)$ with $\ip{x,z} = 0$. 

\begin{figure*}[h]
\begin{algorithm}[\texttt{CSS Orbit}] \label{alg:css-orbit}
\begin{algorithmic}[1]
\Blank
\Input Pauli operator $P$ on $n$ qubits
\Output CSS circuit $\mathcal{C}$ of length at most $2$ and Pauli operator $P'$ of weight at most two, 
such that $\mathcal{C}$ maps $P$ to $P' \in \{\pm I,\pm X_j, \pm Y_j, \pm Z_j, \pm Y_j Y_k : j,k \in [n]\}$.
\State let $l,x,z$  be such that $ i^l X^x Z^z \leftarrow P$
\If{ $x = 0,\,z = 0$ }
     \State \Return Empty circuit $\mathcal{C}$, $P' = P$
\ElsIf{$x = 0,\,z \ne 0$}
    \State let $j$ be a non-zero index of $z$,
    \Return $\mathcal{C} = \Lambda(X_j,Z^{z+ e_j})$, $P' = i^l Z_j$ \label{line:css-orbit-z}
\ElsIf{$x \ne 0,\,z = 0$}
    \State let $j$ be a non-zero index of $x$, 
    \Return $\mathcal{C} = \Lambda(Z_j, X^{x+ e_j})$, $P' = i^l X_j$ \label{line:css-orbit-x}
\ElsIf{$x \ne 0,\, z \ne 0, \ip{z,x} = 1$}
    \State let $j$ be a non-zero index of both $x$ and $z$
    \State \Return $\mathcal{C} = \Lambda(Z_j, X^{x + e_j}) \circ \Lambda(X_j, Z^{z + e_j})$, $P' = i^l X_j Z_j$  \label{line:css-orbit-y}
\Else{ $x \ne 0,\, z\ne 0, \ip{z,x} = 0$}
    \State let $j,k$ be non-zero indices of both $x$ and $z$
    \State \Return $\mathcal{C} = \Lambda(Z_j, X^{x + e_j}) \circ \Lambda(X_k, Z^{z + e_k})$, $P' = -i^l Y_j Y_k$  \label{line:css-orbit-xz}
\EndIf
\end{algorithmic}
\end{algorithm}
\end{figure*}

\begin{proposition}
\cref{alg:css-orbit} is correct.
\end{proposition}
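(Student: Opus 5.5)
The plan is to verify each branch of \cref{alg:css-orbit} directly with the conjugation identity \cref{eq:controlled-pauli-image}. Two preliminary observations make this mechanical. First, every gate used is of the form $\Lambda(X^{a},Z^{b})$ or $\Lambda(Z^{b},X^{a})$ with $\ip{a,b}=0$, since $\ip{e_j, z+e_j} = z_j + 1 = 0$ when $z_j=1$, and similarly for $x$. The two Paulis therefore commute, and by \cref{eq:controlled-pauli-image} the gate maps $X$-type Paulis to $X$-type and $Z$-type Paulis to $Z$-type, so it is a CSS Clifford (a product of CNOTs). Second, conjugation is multiplicative and $\Lambda$ is self-inverse, so the image of $i^l X^x Z^z$ is $i^l\,\mathcal{C}(X^x)\,\mathcal{C}(Z^z)$. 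Each image is again a product of only $X$'s or only $Z$'s, so no phase can appear beyond $i^l$ and the order reorganisation at the end. Hence it suffices to track where $X^x$ and $Z^z$ are sent, using $\Lambda(P_1,P_2)\,Q\,\Lambda(P_1,P_2)=P_1^{\comm{P_2,Q}}\,Q\,P_2^{\comm{P_1,Q}}$ with commutators computed as symplectic inner products.

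The first three nontrivial cases go as follows.
\begin{itemize}
\item For $x=0$, the gate $\Lambda(X_j,Z^{z+e_j})$ sends $Z^z$ to $Z^{z+e_j}Z^z = Z_j$, because $\comm{X_j,Z^z}=z_j=1$ and $\comm{Z^{z+e_j},Z^z}=0$.
\item The case $z=0$ is the Hadamard dual, by \cref{eq:css-dual}.
\item For $\ip{z,x}=1$, apply $\Lambda(X_j,Z^{z+e_j})$ first. It fixes $X^x$, because $\comm{Z^{z+e_j},X^x}=\ip{z,x}+x_j=0$ and $\comm{X_j,X^x}=0$, and it sends $Z^z\mapsto Z_j$. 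Then $\Lambda(Z_j,X^{x+e_j})$ sends $X^x\mapsto X_j$ and fixes $Z_j$, because $\comm{X^{x+e_j},Z_j}=0$. The result is $i^lX_jZ_j$.
\end{itemize}

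The last case, $\ip{z,x}=0$, is done the same way. Pick $j\ne k$ and first conjugate by $\Lambda(X_k,Z^{z+e_k})$, which sends $Z^z\mapsto Z_k$ and acts on $X^x$ via $\comm{Z^{z+e_k},X^x}=\ip{z,x}+x_k$. Then conjugate by $\Lambda(Z_j,X^{x+e_j})$, which sends the $X$ part to $X_j$ (or $X_jX_k$) and acts on $Z_k$ via $\comm{X^{x+e_j},Z_k}=x_k$. When $j,k$ both lie in $\operatorname{supp}x\cap\operatorname{supp}z$, this bookkeeping yields $X_jX_kZ_jZ_k$. Since $X_jZ_j=-iY_j$, this equals $(-iY_j)(-iY_k)=-Y_jY_k$, which accounts for the sign in $P'=-i^lY_jY_k$.

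I expect the main obstacle to be this final case, and specifically the existence of the indices. The hypotheses $x,z\ne0$ and $\ip{x,z}=0$ do not force $x$ and $z$ to share two support positions (e.g.\ $x=e_1,z=e_2$). So I would first make precise which index choice is intended, or adjust it, for example by taking $j\in\operatorname{supp}x$ and $k\in\operatorname{supp}z$ with $k\ne j$ and checking that the output still has weight at most two. As a fallback that settles correctness conceptually, I would use that CSS Cliffords act as $X^x\mapsto X^{Ax}$, $Z^z\mapsto Z^{A^{-T}z}$ for $A\in GL_n(\f_2)$. Pairs $(x,z)$ of nonzero vectors with a prescribed value of $\ip{x,z}$ form a single $GL_n$-orbit, so a weight-two representative ($X_jZ_k$ or $X_jX_kZ_jZ_k$) always exists. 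Then only the explicit length-two circuit realising it needs checking.
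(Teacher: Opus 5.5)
Your branch-by-branch verification is the same as the paper's. The paper specializes \cref{eq:controlled-pauli-image} to $X^x\mapsto X^{x+a\ip{b,x}}$ and $Z^z\mapsto Z^{z+b\ip{a,z}}$ under $\Lambda(X^a,Z^b)$, and tracks the $x$- and $z$-bits exactly as you do, including the sign $X_jX_kZ_jZ_k=-Y_jY_k$. One cosmetic difference: the paper applies the left factor of $\circ$ first, and you apply the right one first. In the cases the paper covers, both orders give the same image; in the last branch the paper's order needs $z_j=1$ and yours needs $x_k=1$, and both hold for common indices. You should also say why the indices exist. Since $|\operatorname{supp}x\cap\operatorname{supp}z|\equiv\ip{x,z}\pmod 2$, a common $j$ exists when $\ip{x,z}=1$, and two distinct common indices exist when $\ip{x,z}=0$ and the supports meet.

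The problem you flag in the last branch is genuine, and the paper's proof has the same hole. Its computation there uses $x_j=z_j=z_k=1$ with $j\neq k$, which is impossible for, e.g.\ $P=X_1Z_2$. However, your proposal leaves this case open, and the repair you suggest does not work. Suppose the supports are disjoint and you take $j\in\operatorname{supp}x$, $k\in\operatorname{supp}z$, so $x_k=z_j=0$. Your own bookkeeping then gives $X^x\mapsto X_j$ and $Z^z\mapsto Z_k$, so the circuit of \cref{line:css-orbit-xz} outputs $i^lX_jZ_k$. This has weight two, but it is not in the promised set $\{\pm I,\pm X_j,\pm Y_j,\pm Z_j,\pm Y_jY_k\}$ and differs from the returned $P'=-i^lY_jY_k$. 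So ``weight at most two'' is the wrong thing to check. Your orbit fallback shows only that some CSS Clifford reaches $X_jX_kZ_jZ_k$; it gives no length bound and no connection to the stated circuit.

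A length-two circuit does exist for this subcase:
\begin{itemize}
\item First apply $\Lambda(Z_j,X^{x+e_j+e_k})$, which is CSS because $(x+e_j+e_k)_j=0$. It sends $X^x\mapsto X_jX_k$, and $Z^z\mapsto Z^{z+e_j}$ since $\ip{x+e_j+e_k,z}=z_k=1$.
\item Then apply $\Lambda(X_k,Z^{z+e_k})$. It fixes $X_jX_k$ because $\ip{z+e_k,e_j+e_k}=0$, and sends $Z^{z+e_j}\mapsto Z_jZ_k$ because $\ip{e_k,z+e_j}=1$.
\end{itemize}
Thus $P\mapsto -i^lY_jY_k$, as the algorithm claims, once the last branch is split according to whether the supports of $x$ and $z$ meet.
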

\begin{proof}
We check that circuit $\mathcal{C}$ maps the input Pauli $P$ by inspection using the following 
observation, which is a direct consequence of \cref{eq:controlled-pauli-image}: 
$$
 \Lambda(X^a,Z^b)\, X^x \, \Lambda(X^a,Z^b)^\dagger = X^{x + a\cdot\ip{b,x} },\quad \Lambda(X^a,Z^b)\,Z^z\, \Lambda(X^a,Z^b)^\dagger = Z^{z + b\cdot\ip{a,z} }  
$$
When $P$ is equal to $X^x$ or $Z^z$ up to a phase (\cref{line:css-orbit-x,line:css-orbit-z} in \cref{alg:css-orbit}), we get 
$$
 x + (x+e_j) \ip{e_j,x} = e_j, \text{ or } z + (z+e_j) \ip{e_j,z} = e_j,
$$
for $x$ or $z$ bits correspondingly.
When $P$ is equal to $i^l X^x Z^z$ with $\ip{x,z} = 1$, $x,z \ne 0$ (\cref{line:css-orbit-y} in \cref{alg:css-orbit}) after applying the first generalized controlled Pauli the $x,z$ bits become:
$$
 x + (x+e_j) \ip{e_j,x} = e_j, \quad z + e_j \ip{x + e_j,z} = z ,
$$
and after the second generalized controlled Pauli the $x,z$ bits become:
$$
e_j + e_j \ip{z+e_j,e_j} = e_j, \quad z + ( z + e_j) \ip{e_j,z} = e_j.
$$
When $P$ is equal to $i^l X^x Z^z$ with $\ip{x,z} = 0$, $x,z \ne 0$ (\cref{line:css-orbit-xz} in \cref{alg:css-orbit}) after applying the first generalized controlled Pauli the $x,z$ bits become:
$$
 x + (x+e_j) \ip{e_j,x} = e_j, \quad z + e_j \ip{x + e_j,z} = z + e_j,
$$
and after the second generalized controlled Pauli the $x,z$ bits become:
$$
e_j + e_k\ip{e_j,z+e_k} = e_j + e_k, \quad (z+e_j) + (z+e_k) \ip{e_k,z+e_j} = e_j + e_k.
$$
\end{proof}

\newpage 

\section{Explicit Bruhat decompositions of Pauli exponents followed by Hadamard gates} 
\label{tab:bruhat-decompositions}
{
\small
Below we provide Bruhat decompositions of Pauli exponents followed by Hadamard gates. We use $\tilde h = e^{\frac{i\pi}{4}} H$ to avoid $e^{\frac{i \pi}{4}}$ in the phase expression, lowercase letters for common gates, and ${(\uparrow\downarrow)}_{i,j}$ for the Swap gate to make the table more compact. A table like this can be easily computed using a breadth-first search algorithm similar to~\cite{KliuchnikovMaslov}.

$\,$
}

    {
    \small
    \begin{tabular}{||c||c|l|l|l|l||}
    
    $e^{-\frac{i\pi}{4}} i^l \Uph \Uh P \Vph$ & $i^l$ & $\Uph$ & $\Uh$ & P & $\Vph$ \\ \hline
$e^{\frac{i\pi}{4} X_2}\mathrm{\tilde{h}}_{2}$ & $-i$ & 
   $\mathrm{s}_{2}$ & $\mathrm{\tilde{h}}_{2}$ & $\mathrm{x}_{2}$ & $\,$ \\
$e^{\frac{i\pi}{4} Y_2}\mathrm{\tilde{h}}_{2}$ & $-i$ & 
   $\,$ & $\,$ & $\mathrm{x}_{2}$ & $\,$ \\
$e^{\frac{i\pi}{4} Z_2}\mathrm{\tilde{h}}_{2}$ & $-i$ & 
   $\,$ & $\mathrm{\tilde{h}}_{2}$ & $\mathrm{z}_{2}$ & $\mathrm{s}_{2}$ \\
$e^{\frac{i\pi}{4} Y_2 Y_3}\mathrm{\tilde{h}}_{2} \mathrm{\tilde{h}}_{3}$ & $1$ & 
   $\mathrm{s}_{2} {(\uparrow\downarrow)}_{3,2} \mathrm{c\mathrm{x}}_{2,3}$ & $\mathrm{\tilde{h}}_{2} \mathrm{\tilde{h}}_{3}$ & $\,$ & $\mathrm{c\mathrm{x}}_{2,3} \mathrm{s}_{3}$ \\
$e^{\frac{i\pi}{4}X_0}\,$ & $1$ & 
   $\mathrm{s}_{0}$ & $\mathrm{\tilde{h}}_{0}$ & $\,$ & $\mathrm{s}_{0}$ \\
$e^{\frac{i\pi}{4}X_0 X_2}\mathrm{\tilde{h}}_{2}$ & $1$ & 
   $\mathrm{s}_{0} \mathrm{c\mathrm{x}}_{2,0} \mathrm{s}_{2}$ & $\mathrm{\tilde{h}}_{0} \mathrm{\tilde{h}}_{2}$ & $\mathrm{x}_{2}$ & $\mathrm{s}_{0}$ \\
$e^{\frac{i\pi}{4}X_0 Y_2}\mathrm{\tilde{h}}_{2}$ & $1$ & 
   $\mathrm{c\mathrm{z}}_{2,0} {(\uparrow\downarrow)}_{2,0} \mathrm{c\mathrm{x}}_{2,0}$ & $\mathrm{\tilde{h}}_{0} \mathrm{\tilde{h}}_{2}$ & $\mathrm{z}_{0} \mathrm{x}_{2}$ & $\mathrm{c\mathrm{z}}_{2,0}$ \\
$e^{\frac{i\pi}{4}X_0 Z_2}\mathrm{\tilde{h}}_{2}$ & $1$ & 
   $\mathrm{s}_{0}$ & $\mathrm{\tilde{h}}_{0} \mathrm{\tilde{h}}_{2}$ & $\mathrm{z}_{2}$ & $\mathrm{c\mathrm{x}}_{2,0} \mathrm{s}_{2} \mathrm{s}_{0}$ \\
$e^{\frac{i\pi}{4}X_0 Y_2 Y_3}\mathrm{\tilde{h}}_{2} \mathrm{\tilde{h}}_{3}$ & $i$ & 
   $\mathrm{c\mathrm{z}}_{3,0} \mathrm{c\mathrm{x}}_{2,3} \mathrm{s}_{0}$ & $\mathrm{\tilde{h}}_{0} \mathrm{\tilde{h}}_{2} \mathrm{\tilde{h}}_{3}$ & $\mathrm{y}_{0}$ & $\mathrm{c\mathrm{x}}_{3,2} \mathrm{c\mathrm{x}}_{0,3} \mathrm{c\mathrm{x}}_{2,0} \mathrm{s}_{2} \mathrm{s}_{0}$ \\
$e^{\frac{i\pi}{4}Y_0}\,$ & $1$ & 
   $\,$ & $\mathrm{\tilde{h}}_{0}$ & $\mathrm{x}_{0}$ & $\,$ \\
$e^{\frac{i\pi}{4}Y_0 X_2}\mathrm{\tilde{h}}_{2}$ & $1$ & 
   $\mathrm{c\mathrm{z}}_{2,0} \mathrm{c\mathrm{x}}_{2,0}$ & $\mathrm{\tilde{h}}_{0} \mathrm{\tilde{h}}_{2}$ & $\mathrm{x}_{0}$ & $\,$ \\
$e^{\frac{i\pi}{4}Y_0 Y_2}\mathrm{\tilde{h}}_{2}$ & $i$ & 
   $\mathrm{c\mathrm{z}}_{2,0} {(\uparrow\downarrow)}_{2,0} \mathrm{s}_{0}$ & $\mathrm{\tilde{h}}_{0} \mathrm{\tilde{h}}_{2}$ & $\mathrm{y}_{2}$ & $\mathrm{c\mathrm{x}}_{2,0} \mathrm{s}_{2} \mathrm{s}_{0}$ \\
$e^{\frac{i\pi}{4}Y_0 Z_2}\mathrm{\tilde{h}}_{2}$ & $1$ & 
   $\mathrm{c\mathrm{x}}_{0,2}$ & $\mathrm{\tilde{h}}_{0} \mathrm{\tilde{h}}_{2}$ & $\mathrm{x}_{0}$ & $\mathrm{c\mathrm{z}}_{2,0}$ \\
$e^{\frac{i\pi}{4}Y_0 Y_2 Y_3}\mathrm{\tilde{h}}_{2} \mathrm{\tilde{h}}_{3}$ & $1$ & 
   $\mathrm{c\mathrm{z}}_{3,0} \mathrm{c\mathrm{x}}_{2,3}$ & $\mathrm{\tilde{h}}_{0} \mathrm{\tilde{h}}_{2} \mathrm{\tilde{h}}_{3}$ & $\mathrm{z}_{0} \mathrm{z}_{2}$ & $\mathrm{c\mathrm{x}}_{3,2} \mathrm{c\mathrm{x}}_{0,3} \mathrm{c\mathrm{z}}_{2,0} \mathrm{c\mathrm{x}}_{2,0}$ \\
$e^{\frac{i\pi}{4}Z_0}\,$ & $-i$ & 
   $\,$ & $\,$ & $\mathrm{z}_{0}$ & $\mathrm{s}_{0}$ \\
$e^{\frac{i\pi}{4}Z_0 X_2}\mathrm{\tilde{h}}_{2}$ & $-i$ & 
   $\mathrm{c\mathrm{z}}_{2,0} \mathrm{s}_{2} \mathrm{s}_{0}$ & $\mathrm{\tilde{h}}_{2}$ & $\mathrm{z}_{0} \mathrm{x}_{2}$ & $\,$ \\
$e^{\frac{i\pi}{4}Z_0 Y_2}\mathrm{\tilde{h}}_{2}$ & $-i$ & 
   $\,$ & $\,$ & $\mathrm{x}_{2}$ & $\mathrm{c\mathrm{z}}_{2,0} \mathrm{c\mathrm{x}}_{0,2}$ \\
$e^{\frac{i\pi}{4}Z_0 Z_2}\mathrm{\tilde{h}}_{2}$ & $-i$ & 
   $\mathrm{c\mathrm{x}}_{0,2} \mathrm{s}_{0}$ & $\mathrm{\tilde{h}}_{2}$ & $\mathrm{z}_{0} \mathrm{z}_{2}$ & $\mathrm{s}_{2}$ \\
$e^{\frac{i\pi}{4}Z_0 Y_2 Y_3}\mathrm{\tilde{h}}_{2} \mathrm{\tilde{h}}_{3}$ & $1$ & 
   $\mathrm{c\mathrm{x}}_{0,2} \mathrm{s}_{2} {(\uparrow\downarrow)}_{3,2} \mathrm{c\mathrm{x}}_{2,3} \mathrm{c\mathrm{x}}_{0,2}$ & $\mathrm{\tilde{h}}_{2} \mathrm{\tilde{h}}_{3}$ & $\,$ & $\mathrm{c\mathrm{x}}_{2,3} \mathrm{s}_{3}$ \\
$e^{\frac{i\pi}{4}Y_0 Y_1}\,$ & $i$ & 
   $\mathrm{c\mathrm{x}}_{1,0} \mathrm{s}_{0}$ & $\mathrm{\tilde{h}}_{1}$ & $\mathrm{z}_{0} \mathrm{y}_{1}$ & $\mathrm{s}_{0} \mathrm{c\mathrm{x}}_{1,0}$ \\
$e^{\frac{i\pi}{4}Y_0 Y_1 X_2}\mathrm{\tilde{h}}_{2}$ & $i$ & 
   $\mathrm{c\mathrm{z}}_{2,1} \mathrm{c\mathrm{x}}_{2,1} \mathrm{c\mathrm{x}}_{1,0} \mathrm{s}_{0}$ & $\mathrm{\tilde{h}}_{1} \mathrm{\tilde{h}}_{2}$ & $\mathrm{z}_{0} \mathrm{y}_{1}$ & $\mathrm{s}_{0} \mathrm{c\mathrm{x}}_{1,0}$ \\
$e^{\frac{i\pi}{4}Y_0 Y_1 Y_2}\mathrm{\tilde{h}}_{2}$ & $1$ & 
   $\mathrm{c\mathrm{z}}_{2,1} \mathrm{c\mathrm{x}}_{2,1} \mathrm{c\mathrm{x}}_{1,0} \mathrm{c\mathrm{x}}_{0,2}$ & $\mathrm{\tilde{h}}_{1} \mathrm{\tilde{h}}_{2}$ & $\mathrm{x}_{1}$ & $\mathrm{c\mathrm{z}}_{2,0} \mathrm{c\mathrm{x}}_{1,0}$ \\
$e^{\frac{i\pi}{4}Y_0 Y_1 Z_2}\mathrm{\tilde{h}}_{2}$ & $i$ & 
   $\mathrm{c\mathrm{x}}_{1,0} \mathrm{c\mathrm{x}}_{0,2} \mathrm{s}_{0}$ & $\mathrm{\tilde{h}}_{1} \mathrm{\tilde{h}}_{2}$ & $\mathrm{z}_{0} \mathrm{y}_{1}$ & $\mathrm{s}_{0} \mathrm{c\mathrm{z}}_{2,0} \mathrm{c\mathrm{x}}_{1,0}$ \\
$e^{\frac{i\pi}{4}Y_0 Y_1 Y_2 Y_3}\mathrm{\tilde{h}}_{2} \mathrm{\tilde{h}}_{3}$ & $1$ & 
   $\mathrm{c\mathrm{x}}_{1,0} \mathrm{c\mathrm{x}}_{0,3} \mathrm{c\mathrm{z}}_{2,0} \mathrm{c\mathrm{x}}_{3,2} \mathrm{s}_{0}$ & $\mathrm{\tilde{h}}_{1} \mathrm{\tilde{h}}_{2} \mathrm{\tilde{h}}_{3}$ & $\mathrm{z}_{0}$ & $\mathrm{c\mathrm{x}}_{2,3} \mathrm{c\mathrm{x}}_{0,2} \mathrm{s}_{0} \mathrm{c\mathrm{z}}_{3,0} \mathrm{c\mathrm{x}}_{1,0}$ \\
$e^{-\frac{i\pi}{4}X_0}\,$ & $i$ & 
   $\mathrm{s}_{0}$ & $\mathrm{\tilde{h}}_{0}$ & $\mathrm{y}_{0}$ & $\mathrm{s}_{0}$ \\
$e^{-\frac{i\pi}{4}X_0 X_2}\mathrm{\tilde{h}}_{2}$ & $i$ & 
   $\mathrm{s}_{0} \mathrm{c\mathrm{x}}_{2,0} \mathrm{s}_{2}$ & $\mathrm{\tilde{h}}_{0} \mathrm{\tilde{h}}_{2}$ & $\mathrm{y}_{0}$ & $\mathrm{s}_{0}$ \\
$e^{-\frac{i\pi}{4}X_0 Y_2}\mathrm{\tilde{h}}_{2}$ & $1$ & 
   $\mathrm{c\mathrm{z}}_{2,0} {(\uparrow\downarrow)}_{2,0} \mathrm{c\mathrm{x}}_{2,0}$ & $\mathrm{\tilde{h}}_{0} \mathrm{\tilde{h}}_{2}$ & $\mathrm{x}_{0}$ & $\mathrm{c\mathrm{z}}_{2,0}$ \\
$e^{-\frac{i\pi}{4}X_0 Z_2}\mathrm{\tilde{h}}_{2}$ & $i$ & 
   $\mathrm{s}_{0}$ & $\mathrm{\tilde{h}}_{0} \mathrm{\tilde{h}}_{2}$ & $\mathrm{y}_{0}$ & $\mathrm{c\mathrm{x}}_{2,0} \mathrm{s}_{2} \mathrm{s}_{0}$ \\
$e^{-\frac{i\pi}{4}X_0 Y_2 Y_3}\mathrm{\tilde{h}}_{2} \mathrm{\tilde{h}}_{3}$ & $1$ & 
   $\mathrm{c\mathrm{z}}_{3,0} \mathrm{c\mathrm{x}}_{2,3} \mathrm{s}_{0}$ & $\mathrm{\tilde{h}}_{0} \mathrm{\tilde{h}}_{2} \mathrm{\tilde{h}}_{3}$ & $\mathrm{z}_{2}$ & $\mathrm{c\mathrm{x}}_{3,2} \mathrm{c\mathrm{x}}_{0,3} \mathrm{c\mathrm{x}}_{2,0} \mathrm{s}_{2} \mathrm{s}_{0}$ \\
$e^{-\frac{i\pi}{4}Y_0}\,$ & $1$ & 
   $\,$ & $\mathrm{\tilde{h}}_{0}$ & $\mathrm{z}_{0}$ & $\,$ \\
$e^{-\frac{i\pi}{4}Y_0 X_2}\mathrm{\tilde{h}}_{2}$ & $1$ & 
   $\mathrm{c\mathrm{z}}_{2,0} \mathrm{c\mathrm{x}}_{2,0}$ & $\mathrm{\tilde{h}}_{0} \mathrm{\tilde{h}}_{2}$ & $\mathrm{z}_{0} \mathrm{x}_{2}$ & $\,$ \\
$e^{-\frac{i\pi}{4}Y_0 Y_2}\mathrm{\tilde{h}}_{2}$ & $1$ & 
   $\mathrm{c\mathrm{z}}_{2,0} {(\uparrow\downarrow)}_{2,0} \mathrm{s}_{0}$ & $\mathrm{\tilde{h}}_{0} \mathrm{\tilde{h}}_{2}$ & $\mathrm{z}_{0}$ & $\mathrm{c\mathrm{x}}_{2,0} \mathrm{s}_{2} \mathrm{s}_{0}$ \\
$e^{-\frac{i\pi}{4}Y_0 Z_2}\mathrm{\tilde{h}}_{2}$ & $1$ & 
   $\mathrm{c\mathrm{x}}_{0,2}$ & $\mathrm{\tilde{h}}_{0} \mathrm{\tilde{h}}_{2}$ & $\mathrm{z}_{0}$ & $\mathrm{c\mathrm{z}}_{2,0}$ \\
$e^{-\frac{i\pi}{4}Y_0 Y_2 Y_3}\mathrm{\tilde{h}}_{2} \mathrm{\tilde{h}}_{3}$ & $1$ & 
   $\mathrm{c\mathrm{z}}_{3,0} \mathrm{c\mathrm{x}}_{2,3}$ & $\mathrm{\tilde{h}}_{0} \mathrm{\tilde{h}}_{2} \mathrm{\tilde{h}}_{3}$ & $\mathrm{x}_{0}$ & $\mathrm{c\mathrm{x}}_{3,2} \mathrm{c\mathrm{x}}_{0,3} \mathrm{c\mathrm{z}}_{2,0} \mathrm{c\mathrm{x}}_{2,0}$ \\
$e^{-\frac{i\pi}{4}Z_0}\,$ & $1$ & 
   $\,$ & $\,$ & $\,$ & $\mathrm{s}_{0}$ \\
$e^{-\frac{i\pi}{4}Z_0 X_2}\mathrm{\tilde{h}}_{2}$ & $1$ & 
   $\mathrm{c\mathrm{z}}_{2,0} \mathrm{s}_{2} \mathrm{s}_{0}$ & $\mathrm{\tilde{h}}_{2}$ & $\,$ & $\,$ \\
$e^{-\frac{i\pi}{4}Z_0 Y_2}\mathrm{\tilde{h}}_{2}$ & $-i$ & 
   $\,$ & $\,$ & $\mathrm{z}_{0} \mathrm{z}_{2}$ & $\mathrm{c\mathrm{z}}_{2,0} \mathrm{c\mathrm{x}}_{0,2}$ \\
$e^{-\frac{i\pi}{4}Z_0 Z_2}\mathrm{\tilde{h}}_{2}$ & $1$ & 
   $\mathrm{c\mathrm{x}}_{0,2} \mathrm{s}_{0}$ & $\mathrm{\tilde{h}}_{2}$ & $\,$ & $\mathrm{s}_{2}$ \\
$e^{-\frac{i\pi}{4}Z_0 Y_2 Y_3}\mathrm{\tilde{h}}_{2} \mathrm{\tilde{h}}_{3}$ & $-i$ & 
   $\mathrm{c\mathrm{x}}_{0,2} \mathrm{s}_{2} {(\uparrow\downarrow)}_{3,2} \mathrm{c\mathrm{x}}_{2,3} \mathrm{c\mathrm{x}}_{0,2}$ & $\mathrm{\tilde{h}}_{2} \mathrm{\tilde{h}}_{3}$ & $\mathrm{z}_{0} \mathrm{x}_{2} \mathrm{z}_{3}$ & $\mathrm{c\mathrm{x}}_{2,3} \mathrm{s}_{3}$ \\
$e^{-\frac{i\pi}{4}Y_0 Y_1}\,$ & $1$ & 
   $\mathrm{c\mathrm{x}}_{1,0} \mathrm{s}_{0}$ & $\mathrm{\tilde{h}}_{1}$ & $\mathrm{z}_{0}$ & $\mathrm{s}_{0} \mathrm{c\mathrm{x}}_{1,0}$ \\
$e^{-\frac{i\pi}{4}Y_0 Y_1 X_2}\mathrm{\tilde{h}}_{2}$ & $1$ & 
   $\mathrm{c\mathrm{z}}_{2,1} \mathrm{c\mathrm{x}}_{2,1} \mathrm{c\mathrm{x}}_{1,0} \mathrm{s}_{0}$ & $\mathrm{\tilde{h}}_{1} \mathrm{\tilde{h}}_{2}$ & $\mathrm{z}_{0} \mathrm{x}_{2}$ & $\mathrm{s}_{0} \mathrm{c\mathrm{x}}_{1,0}$ \\
$e^{-\frac{i\pi}{4}Y_0 Y_1 Y_2}\mathrm{\tilde{h}}_{2}$ & $1$ & 
   $\mathrm{c\mathrm{z}}_{2,1} \mathrm{c\mathrm{x}}_{2,1} \mathrm{c\mathrm{x}}_{1,0} \mathrm{c\mathrm{x}}_{0,2}$ & $\mathrm{\tilde{h}}_{1} \mathrm{\tilde{h}}_{2}$ & $\mathrm{z}_{1} \mathrm{x}_{2}$ & $\mathrm{c\mathrm{z}}_{2,0} \mathrm{c\mathrm{x}}_{1,0}$ \\
$e^{-\frac{i\pi}{4}Y_0 Y_1 Z_2}\mathrm{\tilde{h}}_{2}$ & $1$ & 
   $\mathrm{c\mathrm{x}}_{1,0} \mathrm{c\mathrm{x}}_{0,2} \mathrm{s}_{0}$ & $\mathrm{\tilde{h}}_{1} \mathrm{\tilde{h}}_{2}$ & $\mathrm{z}_{0}$ & $\mathrm{s}_{0} \mathrm{c\mathrm{z}}_{2,0} \mathrm{c\mathrm{x}}_{1,0}$ \\
$e^{-\frac{i\pi}{4}Y_0 Y_1 Y_2 Y_3}\mathrm{\tilde{h}}_{2} \mathrm{\tilde{h}}_{3}$ & $i$ & 
   $\mathrm{c\mathrm{x}}_{1,0} \mathrm{c\mathrm{x}}_{0,3} \mathrm{c\mathrm{z}}_{2,0} \mathrm{c\mathrm{x}}_{3,2} \mathrm{s}_{0}$ & $\mathrm{\tilde{h}}_{1} \mathrm{\tilde{h}}_{2} \mathrm{\tilde{h}}_{3}$ & $\mathrm{z}_{0} \mathrm{y}_{1}$ & $\mathrm{c\mathrm{x}}_{2,3} \mathrm{c\mathrm{x}}_{0,2} \mathrm{s}_{0} \mathrm{c\mathrm{z}}_{3,0} \mathrm{c\mathrm{x}}_{1,0}$ 

    \end{tabular}
    }

\end{document}